\documentclass[letterpaper, 10 pt, conference]{ieeeconf}

\IEEEoverridecommandlockouts

\usepackage{amsmath}
\usepackage{color}
\usepackage{algorithm}
\usepackage{tikz}
\usepackage{graphicx}
\usepackage{authblk}
\usepackage{mathtools}
\usepackage{tikz}
\usetikzlibrary{arrows,automata}
\usepackage{amsfonts}
\usepackage[latin9]{inputenc}

\title{\LARGE \bf  Robust Temporal Logic Model Predictive Control}

\author{Sadra Sadraddini and Calin Belta 
\thanks{The authors are with the Department of Mechanical Engineering, Boston University, Boston, MA 02215 \{sadra,cbelta\}@bu.edu \\ This work was partially supported by the NSF under grants CPS-1446151 and CMMI-1400167.
\\
This work has been accepted to appear in the proceedings of 53rd Annual Allerton Conference on Communication, Control and Computing, Urbana-Champaign,
IL (2015)
\\
\textcopyright 2015 IEEE. Personal use of this material is permitted. Permission from IEEE must be obtained for all other uses, in any current or future media, including reprinting/republishing this material for advertising or promotional purposes, creating new collective works, for resale or redistribution to servers or lists, or reuse of any copyrighted component of this work in other works. 
}
}

\newtheorem{define}{Definition}
\newtheorem{example}{Example}
\newtheorem{problem}{Problem}
\newtheorem{lemma}{Lemma}
\newtheorem{remark}{Remark}

\newtheorem{theorem}{Theorem}
\newtheorem{corollary}{Corollary}
\newtheorem{proposition}{Proposition}

\makeatletter
\def\ps@IEEEtitlepagestyle{%
  \def\@oddfoot{\mycopyrightnotice}%
  \def\@evenfoot{}%
}
\def\mycopyrightnotice{%
  {\footnotesize The copyright belongs to me!\hfill}
  \gdef\mycopyrightnotice{}
}

\usepackage{amsfonts}

\bibliographystyle{IEEEtran}

\begin{document}
\maketitle

 \thispagestyle{empty}
\pagestyle{empty}
\makeatother

\begin{abstract}
Control synthesis from temporal logic specifications has gained popularity in recent years. 
In this paper, we use a model predictive approach to control discrete time linear systems with additive bounded disturbances subject to constraints given as formulas of signal temporal logic (STL). 
We introduce a (conservative) computationally efficient framework to synthesize control strategies based on mixed integer programs. The designed controllers satisfy the temporal logic requirements, are robust  to all possible realizations of the disturbances, and optimal with respect to a cost function. In case the temporal logic constraint is infeasible, the controller satisfies  a relaxed, minimally violating constraint.  An illustrative case study is included. 
\end{abstract}

\section{Introduction}
Model predictive control (MPC), also known as receding horizon control (RHC), is a popular approach to generate
(sub)optimal control strategies for systems with constraints \cite{garcia1989model},\cite{camacho2013model}. During the last decades, many theoretical aspects of MPC, such as stability and robustness, have been investigated \cite{Kothare1996},\cite{Mayne2000}. However, most works in the MPC literature focus on simple classes of performance objectives and constraints such as closeness to a reference point or trajectory. Recently, there has been a growing trend in control theory in considering a richer class of constraints that are described using rules and symbolism from formal methods such as temporal logics. 

Temporal logics \cite{Baier2008}, such as linear temporal logic (LTL), computational tree logic (CTL), and signal temporal logic (STL), are able to describe a wide range of specifications. For example, satisfying disjoint sets of constraints infinitely often with specific deadlines for the satisfaction of each of them, (i.e., oscillatory behavior), can be naturally expressed in a temporal logic such as STL. 

Temporal logic control based on finite abstractions \cite{kloetzer2008fully},\cite{kress2009temporal},\cite{tabuada2006linear} is a correct by construction control synthesis method that involves a finite state machine representation of the control system that is generally expensive to compute. To address this limitation, some works proposed receding horizon approaches to temporal logic control  \cite{wongpiromsarn2010receding}, \cite{AydinGol2015}. Inspired by \cite{bemporad1999control}, the authors in \cite{karaman},\cite{wolff2013optimal} have developed methods for translation of LTL constraints into mixed-integer constraints that are used in the controller synthesis algorithm. More recently, \cite{raman}, \cite{raman2015reactive} have extended this methodology to MPC from STL specifications by developing mixed-integer encodings of bounded time model checking.

In this paper, we use STL formulas over predicates in the state of the system to specify correctness requirements. We focus on discrete-time linear systems with additive bounded uncertainties. We propose an MPC approach to the synthesis problem with the goal of satisfying the STL specification globally (i.e., for all times) by characterizing the bounded propagation of uncertainties into STL constraints. We take a conservative approach that is computationally as tractable as STL control of deterministic systems. Furthermore, the notions introduced in this paper enable to treat STL constraints as \emph{soft constraints} that may be violated if a feasible control policy does not exist. We are thus able to find minimally violating solutions in the presence of uncertainties and limited control actuation.

In the closest related work, the authors in \cite{raman2015reactive} use a counter example guided approach to receding horizon control of disturbed systems. A major disadvantage of this approach is that the generation of counter examples may never terminate. Also, taking into account a large number of counter examples is computationally intractable. Furthermore, since there does not exist a global feasibility guarantee for STL MPC, the control algorithm in \cite{raman2015reactive} may encounter infeasibility, an issue that we address in this paper.

This paper is organized as follows. First, we informally state the problem in Section \ref{sec:problem}. Next, the receding horizon mechanism of STL control is explained in Section \ref{sec:rhc}. After providing technical details about robust prediction in Section \ref{sec:robust}, we formalize the problem as an optimization problem in Section \ref{sec:optimization}. Finally, a case study is presented in Section \ref{sec:case}.

\section{Problem Statement}
\label{sec:problem}
We consider discrete-time dynamical systems of the form
\begin{equation}
\label{eq:dynamics}
x[t+1]=Ax[t]+Bu[t]+w[t],
\end{equation}
\begin{equation}
y[t]=Cx[t]+Du[t]+e,
\label{eq:secondary_signals}
\end{equation}
where $x \in \mathbb{R}^n $ is the state, $u \in \mathcal{U} \subseteq \mathbb{R}^m$ is the control restricted to an admissible set
$\mathcal{U}$ and $t \in \mathbb{Z}_{\ge 0}$ is time. The exogenous inputs, or additive disturbances, $w[t] \in \mathcal{W} \subset \mathbb{R}^n$ are unknown but restricted to a known bounded set $\mathcal{W}$, which is assumed to be a polytope. Matrices $A,B$ are fully known with appropriate dimensions. We also consider an associated stage cost function $J(x[t],u[t]), J:\mathbb{R}^n \times \mathcal{U} \rightarrow \mathbb{R}$, defined for each time step. The output $y$, and the corresponding $C$, $D$ and $e$ matrices, are defined from the predicates present in the temporal logic specification, as it will be shown below.

STL is a formal framework for describing a wide range of specifications in a convenient and compact form. The formal definition of STL is not presented in this paper, and the interested reader is referred to \cite{maler_stl,donze}. Informally, STL formulas consist of boolean connectives $\neg$ (negation), $\wedge$ (conjunction), $\vee$ (disjunction), and bounded-time temporal operators  $\mathcal{U}_{[a,b]}$ (until between $a$ and $b$), $\Diamond_{[a,b]}$ (eventually between $a$ and $b$) and $\Box_{[a,b]}$ (always between $a$ and $b$) that operate on a finite set of numerical predicates over the underlying signals. In discrete time setting, we assume that interval bounds $a$ and $b$ are nonnegative integers, $b > a$.

\begin{remark}
STL was originally developed for monitoring dense-time (continuous-time) signals. The reason for restriction of control synthesis to discrete time systems is translating bounded time model checking to mixed integer constraints by using a finite number of integers, a method that has been developed in \cite{raman}. It should be noted that while discrete time approximations may be used to resemble a continuous time system, validity of a STL formula in discretized time, in general, does not indicate validity of the formula in the original continuous system. 
\end{remark}

Basically, an STL specification for a control system is intended to require certain behavior from the system. The predicates determine thresholds for functions of state and control values. For example, $(x_1+x_2 \ge 1)$, is a  predicate over state values $x_1$ and $x_2$. 
In this paper, each predicate, $\mu_i$, is written in the form of:
\begin{equation}
\label{eq:predicates}
\mu_i:=\left(y_i[t] \geq 0 \right)~~~~~ i=1,\cdots,p,
\end{equation}
where the set of outputs $y=(y_1,\cdots, y_p)^T \in \mathbb{R}^p$ are required to be linear functions over the state and control in the form of \eqref{eq:secondary_signals}, and $p$ is the number of predicates. Matrices $C,D$ and vector $e$ are appropriately defined with respect to the specification predicates. 
For instance, $(x_1+x_2 \ge 1)$ corresponds to output $y=x_1+x_2-1$ where the predicate appearing in the STL formula is $(y\ge0)$. In this case, $C=(1~1), D=0$ and $e=-1$. 

Following the terminology from \cite{donze}, we refer to the state and control as \emph{primary} signals and to the scalars $y_i$ as  \emph{secondary} signals. Throughout the terminology used in this paper, the specification, which naturally is over the state and control, is written over the secondary signals. The secondary signals notion provides a consistent and convenient format for analyzing the STL constraints later in the paper. 

The validity of an STL formula is a function of the secondary signals. STL quantitive semantics defines a function called \emph{robustness} that associates a scalar for the quality of satisfaction. The robustness function is recursively defined as follows \cite{donze}:
\begin{equation}
\label{eq:semantics}
\begin{array}{l}
\rho_y^\mu[t]=y[t],
\\
\rho_y^{\neg \varphi} [t]=-\rho_y^\varphi[t], \\
\rho_y^{\varphi \vee \psi}[t]=\max (\rho_y^\varphi[t],\rho_y^\psi[t] ),
\\
\rho_y^{\varphi \wedge \psi} [t]=\min (\rho_y^\varphi[t],\rho_y^\psi[t] ),
\\
\rho_y^{\Diamond_{[a,b]} \varphi}[t]=\underset{t^\prime \in [t+a,t+b]} \max \rho_y^\varphi[t^\prime],
 \\
\rho_y^{\Box_{[a,b]} \varphi}[t]=\underset{t^\prime \in [t+a,t+b]} \min \rho_y^\varphi[t^\prime],
 \\
\rho_y^{\varphi~\mathcal{U}_{[a,b]} \psi}[t]=  \underset{t^\prime \in [t+a,t+b]} \max \left( \min (\rho_y^\varphi[t^\prime], \underset{t^{\prime\prime} \in [t,t^\prime]} \min \rho_y^\psi[t^{\prime\prime}])\right ),
\end{array}
\end{equation}
where $\mu=(y \geq 0)$ is a predicate over scalar $y$ and $\varphi$ and $\psi$ are STL formulas. Basically, $\rho^\varphi_y[t] > 0$ indicates that the formula $\varphi$ is satisfied by secondary signals starting at time $t$ whereas negative robustness indicates violation.
\begin{remark}Zero robustness, which has measure zero in the continuous domain, does not indicate satisfaction nor violation. In this paper, however, algorithms are developed such that zero robustness is considered as satisfaction.
 \end{remark}

\begin{example}
\label{example:system}
Consider the STL formula $\varphi=\Box_{[0,2]} \mu_1 \wedge \Diamond_{[0,3]} \mu_2$ where $\mu_i=(y_i \geq 0), i=1,2$. In words, this formula requires that $\mu_1$ is always true between time zero and two, \emph{and}, $\mu_2$ is eventually true between time zero and three. By applying the quantitative semantics in \eqref{eq:semantics}, the robustness function becomes: 
\begin{equation*}
\begin{split}
\rho_y^\varphi[t]=\min & (\min (y_1[t],y_1[t+1],y_1[t+2]), \\ & \max (y_2[t],y_2[t+1],y_2[t+2],y_2[t+3])).
\end{split}
\end{equation*}
Suppose the values of $y_i[t],t=0, \cdots, 5$ are given in Table \ref{table:signals}:
\begin{table}[H]
\begin{center}
   \caption{$y_1[t]$ and $y_2[t]$ for Example \ref{example:system}}
          \label{table:signals}

  \begin{tabular}{ | l | *{6}{c}  | }
    \hline
    t & 0 & 1 & 2 & 3 & 4 & 5 \\ \hline
    $y_1[t]$ & -0.5 & 1.5 & 1 & 1 & 0.8 & -0.5 \\ \hline
    $y_2[t]$ & 3 & 2 & 0.5 & -1 & -1.5 & -1 \\
    \hline
  \end{tabular}
   \end{center}

\end{table}
The robustness values are $\rho_y^\varphi[0]=-0.5$, $\rho_y^\varphi[1]=1$, $\rho_y^\varphi[2]=0.5$. Note that the values of $\rho_y^\varphi[t],t \geq 3$ are not computable, in general, without knowing the values of $y_i[t], t \geq 6$.

\endproof

\end{example}

Not that the robustness function depends on the future values of secondary signals. In this paper, we are interested in finding controls in a receding horizon manner that evolve the system such that an STL formula $\varphi$ is \emph{globally satisfied}, in the sense that the robustness function is always nonnegative, $\rho_y^\varphi[t] \geq 0 ~ \forall t \in \mathbb{Z}_{\geq 0}$. Global satisfaction can also be viewed as equivalent to satisfying ${\Box}_{[0,\infty]} \varphi$. 

Now we are ready to informally state the problem. Our primary aim is to develop a receding horizon controller that is \emph{correct}, \emph{robust} and \emph{optimal}. 
Note that, like other finite horizon controllers, suboptimal solutions are sought since the control synthesis algorithm depends on the size of the horizon. Furthermore, an issue in constrained controllers is infeasibility that often arises in highly disturbed or under-actuated systems. In this paper, when encountering infeasibility, we soften the STL predicates using slack variables. For example, $(-2 \geq 0)$ is softened as $(-2+\zeta \geq 0)$, which is satisfied by $\zeta \ge 2$.  Details on optimality criterion and predicate softening are given in Section \ref{sec:optimization}. 
\begin{problem}
\label{problem:problem}
Given a discrete-time system \eqref{eq:dynamics},\eqref{eq:secondary_signals}, an STL formula $\varphi$ over the predicates in the form of \eqref{eq:predicates}, a stage cost function $J(x[t],u[t])$, find a receding horizon controller that is:
\begin{enumerate}
\item Correct and Robust: The STL specification $\varphi$ is globally satisfied for all realizations of disturbances,
\item Optimal: cost $J(x[t],u[t])$ cumulated over the receding horizon is optimized, and
\item Minimally Violating: in case the global (finite horizon) satisfaction of $\varphi$ is infeasible, find controls such that the constraints are minimally violated.
\end{enumerate}
\end{problem}
After explaining necessary details in Section \ref{sec:robust}, the problem is formulated as an optimization problem in Section \ref{sec:optimization}.

\section{STL Receding Horizon Control}
\label{sec:rhc}


In this section, we describe the receding horizon mechanism of STL control. 
%
As mentioned earlier, the robustness function depends on the future values of the secondary signals. The time bounds of the temporal operators determine the future times of the secondary signals that are necessary to compute robustness.
\begin{define}
The \emph{horizon} length of STL formula $\varphi$, denoted by $h^\varphi$, is defined as the largest $\tau$ such that that robustness $\rho_y^\varphi[t]$ depends on $y[t+\tau]$. 
 \end{define}
The formula horizon can be recursively computed as \cite{dokhanchi}:
\begin{equation}
\label{eq:horizon}
\begin{array}{l}
h^\mu=0, \\
h^{\neg \varphi}=h^\varphi, \\
h^{\Diamond_{[a,b]}\varphi}=h^{\Box_{[a,b]}\varphi}=b+h^{\varphi},\\
h^{\varphi \wedge \psi}=h^{\varphi \vee \psi}=\max(h^{\varphi},h^{\psi}), \\
h^{\varphi \mathcal{U}_{[a,b]} \psi}=b+\max(h^{\varphi},h^{\psi}), \\ 
\end{array}
\end{equation}
where $\mu$ is a numerical predicate and $\varphi,\psi$ are STL formulas. In discrete time, robustness at a given time is a function of $h^\varphi$ steps of secondary signals in future, i.e. $\rho_y^\varphi[t]$ is a function of $y[t],y[t+1],\cdots,y[t+h^\varphi]$. 
For instance, the horizon length of the specification in Example \ref{example:system} is 3. It should be noted that the \emph{horizon length} of an STL formula, which follows from the terminology used in \cite{dokhanchi}, should not be confused with the \emph{horizon length} of the control sequence that is used in receding horizon control. 

At a given time $t$, the latest robustness that is computable is $\rho_y^\varphi[t-h^\varphi-1]$ given the history of values of secondary signals $y^{his}[t]=\{y[t-h^\varphi-1], \cdots,y[t-1] \}$. The robustness values starting from $\rho_y^\varphi[t-h^\varphi]$ depend on the values of secondary signals starting from time $t$ that are determined by the evolution of the system. We use the system model to predict robustness values and maintain satisfaction, i.e. nonnegative robustness, of the STL specification. Let $\rho_y^\varphi[t+h_p]$ be the most distant in future robustness value that is computed using the  predicted values of $y[t],\cdots,y[t+h_p+h^\varphi]$, where $h_p$ is  the \emph{prediction horizon} that is a user chosen integer. Consecutively, at time $t$, we enforce the following constraints to maintain the global STL satisfaction:
\begin{equation}
\label{eq:constraints}
\left \{
\begin{array}{cc}
\rho_y^\varphi[t-h^\varphi] & \geq 0, \\
\rho_y^\varphi[t-h^\varphi+1] & \geq 0, \\
\vdots & \\ 
\rho_y^\varphi[t+h_p] & \geq 0.
\end{array}
\right.
\end{equation}
Note that when starting the control software, while $t\le h^\varphi$, the set of constraints begin from $\rho_y^\varphi[0]$.
The horizon of predictions for secondary signals is $H=h^\varphi+h_p$ and at each time, the control sequence that is searched for is:
\begin{equation}
\label{eq:control}
u^H[t]=\{ u^t[t],u^{t}[t+1],\cdots,u^t[t+H] \}.
\end{equation}
Note that only the current time control is applied to the system and according to the new measurements, a new control sequence is found at next time step.  
\begin{remark}
In case of $D=0$ in \eqref{eq:secondary_signals}, i.e. secondary signals only functions of state, we can reduce the number of constraints and variables for faster computation. At time $t$, the value of $y[t]$ is already determined hence $\rho^\varphi_y[t-h^\varphi]$ is fixed. Therefore, the set of constraints can be written as $\rho^\varphi[t-h^\varphi+1] \geq 0, \cdots, \rho^\varphi[t+h_p]  \geq 0$ and since $y[t+H]$ is not dependent on $u^t[t+H]$, the finite horizon control sequence \eqref{eq:control} is $u^H[t]=\{u^t[t],u^t[t+1],\cdots,u^t[t+H-1]\}$. \end{remark}

The following theorem establishes closed-loop soundness of the receding horizon controller. 
\begin{theorem}
If ${u}^H[t]$ is found for all $t\in \mathbb{Z}_{\geq 0}$ such that \eqref{eq:constraints} is satisfied, then the evolution of the system from applying the closed loop control sequence
\begin{equation*}
u^0[0],u^1[1], \cdots, u^t[t],\cdots
\end{equation*}
globally satisfies $\varphi$. 
\end{theorem}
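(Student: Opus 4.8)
The plan is to verify directly that the \emph{realized} trajectory---the one generated by feeding the closed-loop controls $u^0[0],u^1[1],\dots$ into \eqref{eq:dynamics},\eqref{eq:secondary_signals}---satisfies $\rho_y^\varphi[\tau]\ge 0$ for every $\tau\in\mathbb{Z}_{\ge 0}$, which is precisely the definition of global satisfaction. The difficulty to keep in mind is that the constraints \eqref{eq:constraints} imposed at time $t$ are enforced on \emph{predicted} secondary signals $y[t],\dots,y[t+H]$, and because controls are recomputed at every step (and disturbances are unknown), these predictions generally differ from what is ultimately realized. So the whole argument hinges on isolating, among the constraints active at each iteration, the ones that are guaranteed to be evaluated on realized values.

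The central observation I would make concerns the \emph{leftmost} constraint $\rho_y^\varphi[t-h^\varphi]\ge 0$ active at iteration $t$. By the definition of the formula horizon and the recursion \eqref{eq:horizon}, $\rho_y^\varphi[t-h^\varphi]$ depends only on $y[t-h^\varphi],\dots,y[t]$, i.e. on a window that ends exactly at the current time $t$ and reaches no further into the future. Of these, $y[t-h^\varphi],\dots,y[t-1]$ belong to the realized history $y^{his}[t]$, while $y[t]=Cx[t]+Du^t[t]+e$ is formed from the actual state $x[t]$ and the control $u^t[t]$, which is the first element of \eqref{eq:control} and is the one actually applied to the plant at time $t$. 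Hence the value of $\rho_y^\varphi[t-h^\varphi]$ enforced by \eqref{eq:constraints} coincides exactly with the robustness of the realized trajectory at that index; no unrealized, prediction- or disturbance-dependent value enters it.

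I would then cover every index by a single iteration: fixing an arbitrary $\tau\ge 0$ and taking the iteration $t=\tau+h^\varphi$ (a valid nonnegative time), the leftmost constraint active there is exactly $\rho_y^\varphi[\tau]\ge 0$, and by the observation above it is evaluated on the realized signals. Since such an iteration exists for every $\tau$ and, by hypothesis, a control $u^H[t]$ meeting \eqref{eq:constraints} is found at each $t$, we obtain $\rho_y^\varphi[\tau]\ge 0$ on the realized trajectory for all $\tau$. The startup caveat of the preceding remark is consistent with this, since while $t\le h^\varphi$ the active constraints already begin at $\rho_y^\varphi[0]$, so the small indices $\tau=0,1,\dots$ are covered as well. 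This yields global satisfaction and completes the argument.

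The step I expect to be the main obstacle is the second one: rigorously arguing that the leftmost robustness value is \emph{invariant} under the receding-horizon recomputation of controls and under the disturbance realization. This requires leaning precisely on the horizon definition to guarantee that the window $[t-h^\varphi,t]$ never extends past the current time, so that re-optimizing the control sequence at later steps cannot retroactively alter $\rho_y^\varphi[t-h^\varphi]$. The remaining constraints $\rho_y^\varphi[t-h^\varphi+1],\dots,\rho_y^\varphi[t+h_p]$ are genuinely provisional and may change across iterations, but they are never relied upon; only the leftmost, already-realized constraint at each iteration carries the argument.
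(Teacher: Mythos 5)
Your proposal is correct and follows essentially the same route as the paper's own proof: both arguments rest on the observation that the leftmost constraint $\rho_y^\varphi[t-h^\varphi]\ge 0$ at iteration $t$ is evaluated entirely on realized quantities (the stored history plus $y[t]$, which is fixed once $u^t[t]$ is applied), and both then sweep over $t$ to cover every index $\tau$. Your version is in fact written more carefully than the paper's (which contains index typos such as $t-h_p$ in place of $t-h^\varphi$), but the underlying idea is identical.
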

\begin{proof}
At time $t$, constraint $\rho_y^\varphi[t-h_p] \geq 0$ is satisfied and $\rho_y^\varphi[t-h_p]$ is fixed since it is not dependent on the further evolution of the system. 
By applying $u^t[t]$, $\rho_y^\varphi[t-h_p]+1 \geq 0$ is guaranteed since $u^t[t]$ was found such that all constraints in  \eqref{eq:constraints} were satisfied. By induction,  $\rho_y^\varphi[t] \geq 0 ~  \forall t \in  \mathbb{Z}_{\geq 0}$ and the specification $\varphi$ is hence globally satisfied.
\end{proof}
The theorem above does not state that $u^H[t]$ satisfying \eqref{eq:constraints} always exist. Intuitively, larger values of prediction $h_p$ may result in better performance due to longer horizon feasibility. However, in uncertain systems, larger horizon robust controllers are usually excessively conservative, and even infeasible since worst case predictions are made for a longer time. We address infeasibility issues in Sec. \ref{sec:optimization}.

\begin{remark}
The closed-loop synthesis approach of this paper differs from the approach in \cite{raman2015reactive}. In the mentioned work, at each time step the control sequence $u^{2h^\varphi}[t]$ is found such that the set of constraints $\rho_y^\varphi[t] \geq 0, \rho_y^\varphi[t+1] \geq 0,\cdots \rho_y^\varphi[t+h^\varphi] \geq 0$ are satisfied and controls $u^t[t],\cdots,u^t[t+h^\varphi-1]$ are fixed by the values found in the previous iterations. Therefore, the synthesis algorithm is based on maintaining STL satisfaction in future whereas the past robustness values are fixed since the control values $u^t[t],\cdots,u^t[t+H-1]$ are not updated anymore, even though they are not yet applied to the system. The control strategy also involves a transient phase where control inputs for initial steps are computed by a slightly different algorithm.
The receding horizon scheme in this paper, on the other hand, is based on both past and future satisfaction maintenance where by measuring the current state and storing the history of the system, the control sequence starting at the current time is updated at each time step. Therefore, our approach is more appropriate for online implementation since in the presence of uncertainties, given online measurements, the controller is able to find solutions that the control sequence stitching approach in \cite{raman2015reactive} may not. 
\end{remark}

\section{Robust STL Satisfaction}
\label{sec:robust}

In this section, we explain our approach to construct a robust MPC mechanism to find controls such that the system evolution satisfies the STL constraints for all possible realizations of disturbances. First, we introduce the notion of positive normal form STL which is important to characterize the behavior of STL constrains with respect to the changes in the secondary signals values. Next we provide the technical details on our approach to STL robust control. 

{\bf Notation} For two same-length vectors $a$ and $b$, inequalities such as $a\ge b$ are interpreted element-wise. The notation $\underline{1}$ stands for appropriately sized vector of all ones. The notation $\otimes$ denotes the Kronecker product.

\subsection{Positive Normal Form STL}

The robustness function constructed from the quantitative semantics in \eqref{eq:semantics} is a $\min/\max$ function that is piecewise linear and, in general, non-convex. The only scaling operation that appears in the quantitative semantics is multiplication by $-1$ from applying the semantics for negation operator.

\begin{define}
An STL formula $\varphi$ is in \emph{positive normal form} if its robustness is a non-decreasing function with respect to the secondary signals, i.e. if $y_1$ and $y_2$ are two secondary signals such that $y_1[\tau] \geq y_2[\tau] ~\forall \tau \in [t,t+h^\varphi]$, then $\rho_{y_1}^\varphi[t] \geq  \rho_{y_2}^\varphi[t]$. 
\end{define}
By writing the STL specification in positive normal form, the constraints become monotonic with respect to the values of secondary signals, which enables us to characterize the bounds of propagation of uncertainties into STL constraints. Furthermore, in Section \ref{sec:optimization}, we exploit the positive normal form structure to soften the STL constraints by the means of addition of slack variables to the secondary signals.   

\begin{proposition}
An STL specification that does not contain negation, i.e. only consisting of $\wedge,\vee, \mathcal{U}_\mathcal{I},  \Diamond_\mathcal{I}, \Box_\mathcal{I}$ where $\mathcal{I}$ is a time interval, is in positive normal form. 
\end{proposition}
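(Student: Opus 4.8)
The plan is to proceed by structural induction on the formula $\varphi$, using the recursive definition of robustness in \eqref{eq:semantics}. The key algebraic fact underlying every step is that both $\min$ and $\max$ are non-decreasing in each of their arguments, and that a composition of non-decreasing maps is again non-decreasing; the only operation in the semantics that reverses monotonicity is the multiplication by $-1$ attached to the negation operator, which is exactly what the hypothesis of the proposition forbids.

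For the base case I would take $\varphi = \mu = (y \geq 0)$, whose robustness is $\rho_y^\mu[t] = y[t]$. This is trivially non-decreasing in the secondary signal, so $y_1[t] \geq y_2[t]$ immediately yields $\rho_{y_1}^\mu[t] \geq \rho_{y_2}^\mu[t]$, and since $h^\mu = 0$ only the value at $\tau = t$ is involved. For the inductive step I would assume the claim for subformulas $\varphi$ and $\psi$ and verify it for each remaining operator. The conjunction and disjunction cases are immediate, since $\rho_y^{\varphi \wedge \psi}[t] = \min(\rho_y^\varphi[t], \rho_y^\psi[t])$ and $\rho_y^{\varphi \vee \psi}[t] = \max(\rho_y^\varphi[t], \rho_y^\psi[t])$ are non-decreasing functions of two arguments that are themselves non-decreasing by hypothesis. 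The temporal operators $\Diamond_{[a,b]}$ and $\Box_{[a,b]}$ simply replace this binary $\min/\max$ by a $\min/\max$ over the finite index set $t' \in [t+a, t+b]$, and the $\mathcal{U}_{[a,b]}$ case is a nested combination of exactly these two operations; in every instance the outer map is monotone and is applied to quantities that are monotone in the signal.

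The one point requiring care, and the main obstacle, is the bookkeeping of the time windows when the inductive hypothesis is invoked at an inner time. To apply the hypothesis for $\varphi$ at an inner instant $t'$, I need $y_1[\tau] \geq y_2[\tau]$ to hold on the window $[t', t'+h^\varphi]$, whereas the stated hypothesis only guarantees the inequality on the window $[t, t+h^{\varphi'}]$ of the whole formula $\varphi'$. Here the horizon recursion \eqref{eq:horizon} does precisely the required work: since $t' \leq t+b$ and $h^{\Diamond_{[a,b]}\varphi} = h^{\Box_{[a,b]}\varphi} = b + h^\varphi$, every inner window $[t', t'+h^\varphi]$ is contained in $[t, t+b+h^\varphi]$, which is exactly the window on which the inequality is assumed. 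The analogous containment holds for $\mathcal{U}_{[a,b]}$, whose horizon is $b + \max(h^\varphi, h^\psi)$, covering both the $\varphi$ window at $t'$ and the $\psi$ window at each $t'' \in [t,t']$.

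Thus the inductive hypothesis is always applicable at the relevant inner times, and monotonicity propagates upward through the formula tree: the robustness of any negation-free STL formula is a non-decreasing function of the secondary signals, which is the definition of positive normal form. I expect the argument to be short once the horizon containments are set up correctly, so the bulk of the write-up is really the verification that the recursion in \eqref{eq:horizon} guarantees those containments uniformly across the operators.
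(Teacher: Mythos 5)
Your proposal is correct and follows essentially the same route as the paper, whose proof is only a one-line sketch observing that without negation the robustness function is a composition of $\min$/$\max$ operators with no sign change and is therefore non-decreasing. Your structural induction, and in particular the verification via the horizon recursion \eqref{eq:horizon} that the inequality hypothesis covers every inner time window, is a careful and rigorous elaboration of exactly that argument.
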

\begin{proof}(sketch)
By excluding the negation operator, the robustness function consists only of $\min$ and $\max$ operators over the secondary signal values without sign change. Therefore, an increase in secondary signal values will either increase robustness, or does not alter robustness. 
\end{proof}

\begin{proposition}
Every STL formula can be transformed into positive normal form. 
\end{proposition}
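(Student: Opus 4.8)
The plan is to reduce the claim to the previous proposition, which already guarantees that any negation-free STL formula is in positive normal form. Since the robustness semantics \eqref{eq:semantics} introduces a sign change only through the negation operator, it suffices to show that every formula is robustness-equivalent to one in which no negation operator appears (after negated predicates have been absorbed into fresh secondary signals). Here I call two formulas robustness-equivalent when their robustness functions coincide for every secondary signal and every $t$; such a rewriting preserves global satisfaction and the monotonicity property at issue.

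First I would push negations toward the leaves of the syntax tree using the dualities that follow directly from \eqref{eq:semantics}. From $-\min(a,b)=\max(-a,-b)$ and $-\max(a,b)=\min(-a,-b)$ one obtains, at the level of robustness, the De Morgan equivalences $\neg(\varphi \wedge \psi) \equiv \neg\varphi \vee \neg\psi$ and $\neg(\varphi \vee \psi) \equiv \neg\varphi \wedge \neg\psi$, the temporal dualities $\neg \Diamond_{[a,b]} \varphi \equiv \Box_{[a,b]} \neg\varphi$ and $\neg \Box_{[a,b]} \varphi \equiv \Diamond_{[a,b]} \neg\varphi$, and double-negation elimination $\neg\neg\varphi \equiv \varphi$. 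Each rewrite preserves robustness exactly and either deletes a negation or strictly decreases the depth of the deepest negation, so a structural induction on the formula shows the procedure terminates with all remaining negations applied directly to predicates.

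The base case removes these predicate-level negations. Since $\rho^{\neg\mu}_y[t] = -\rho^{\mu}_y[t] = -y_i[t]$ for a predicate $\mu_i = (y_i \geq 0)$, the subformula $\neg\mu_i$ has exactly the robustness of a new predicate $\mu_i' := (-y_i \geq 0)$, whose secondary signal is obtained by negating the corresponding row of $C$ and $D$ and the corresponding entry of the vector $e$ in \eqref{eq:secondary_signals}. Replacing each $\neg\mu_i$ by $\mu_i'$ yields a negation-free formula with identical robustness, which by the previous proposition is in positive normal form.

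The main obstacle is the until operator, whose negation is not among the operators $\wedge, \vee, \mathcal{U}_\mathcal{I}, \Diamond_\mathcal{I}, \Box_\mathcal{I}$ admitted in the previous proposition but is rather a (bounded) release operator. I expect to handle this by exploiting the discrete-time, bounded-interval setting: because every interval $[a,b]$ is finite, the until robustness in \eqref{eq:semantics} is a finite $\max$ over $t'$ of a $\min$ that itself contains a finite inner $\min$ over $t''$, so $\varphi \mathcal{U}_{[a,b]} \psi$ can be unrolled into a finite combination of time-shifted copies of $\varphi$ and $\psi$ joined only by $\wedge$ (min) and $\vee$ (max). Performing this unrolling before pushing negations reduces the negation of an until to the De Morgan and double-negation cases already treated, so no release operator is needed and every residual negation ultimately lands on a time-shifted predicate, to be absorbed as in the base case. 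Verifying that this unrolling preserves robustness, and that the bookkeeping of time shifts remains consistent with the horizon computation \eqref{eq:horizon}, is the step I expect to require the most care.
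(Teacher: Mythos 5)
Your proof follows the same skeleton as the paper's: push negations to the leaves via De Morgan and the $\Diamond/\Box$ dualities, then absorb each residual $\neg\mu_i$ into a fresh secondary signal obtained by negating the corresponding row of $C$, $D$ and entry of $e$, and conclude by the previous proposition. The one genuine divergence is the until operator. The paper stays inside the STL grammar by rewriting $\neg(\varphi_1\,\mathcal{U}_{[a,b]}\,\varphi_2)$ as a bounded \emph{release} $\neg\varphi_1\,\mathcal{R}_{[a,b]}\,\neg\varphi_2$, itself expressed with $\Box$, $\vee$, $\mathcal{U}$, $\wedge$; this keeps the rewritten formula linear in the size of the original, at the price of having to check the release identity against the paper's particular quantitative semantics for $\mathcal{U}$. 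You instead exploit the discrete, bounded-interval setting to unroll $\mathcal{U}_{[a,b]}$ into a finite $\max$-of-$\min$s and apply De Morgan directly, which sidesteps the release operator entirely and is semantically sound: the negated until becomes a finite $\min$ of $\max$es, and after leaf absorption the whole expression is a monotone $\min/\max$ combination of re-signed secondary signals. What your route costs: (i) the unrolling produces on the order of $(b-a)\cdot b$ time-shifted copies of the subformulas, a quadratic blow-up that would propagate into the mixed-integer encoding, versus the paper's linear-size rewrite; and (ii) to invoke the previous proposition \emph{syntactically} you need exact time shifts, i.e.\ singleton-interval operators $\Diamond_{[k,k]}$, which the paper's grammar excludes (it requires $b>a$) --- so you should either admit singleton intervals or argue at the level of robustness functions, where monotonicity is immediate. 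Two minor wording points: your termination measure is inverted (De Morgan pushes negations \emph{deeper}, so the decreasing quantity is the height of the subformula under each negation), and you uniformly introduce a fresh signal for every negated predicate, whereas the paper only does so when $\mu_i$ and $\neg\mu_i$ both occur; your choice is harmless but doubles the signal count unnecessarily in the common case.
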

\begin{proof}
We explain how to recursively transform a general STL formula to positive normal form using negation propagation into numerical predicates and modifying the sign of secondary signals:
 \begin{enumerate}
\item Negation recursively  propagates into numerical predicates:
\begin{equation*}
\begin{array}{l} 
\neg (\varphi_1 \wedge \varphi_2) = \neg \varphi_1 \vee \neg\varphi_2,
\\
\neg (\varphi_1 \vee \varphi_2) = \neg \varphi_1 \wedge \neg\varphi_2, \\
\neg (\varphi_1 \mathcal{U}_{[a,b]} \varphi_2) = \neg\varphi_1 \mathcal{R}_{[a,b]} \neg\varphi_2, \\ 
\neg \Diamond_{[a,b]} \varphi = \Box_{[a,b]} \neg\varphi, \\
\neg \Box_{[a,b]} \varphi= \Diamond_{[a,b]} \neg\varphi,
\end{array}
\end{equation*}
where $\mathcal{R}$ is the \emph{release} operator defined as $\varphi_1 \mathcal{R}_{[a,b]} \varphi_2=\Box_{[a,b]} \varphi_2 \vee (\varphi_2 \mathcal{U}_{[a,b]} (\varphi_1\wedge \varphi_2))$.
\item Negation of predicates: If $\neg \mu_i$ appears in the STL formula where $\mu_i=(y_i\ge 0)$ that $y_i=c_i x+d_i u+e_i$ is a secondary signal: 
\begin{itemize}
\item If only $\neg \mu_i$ appears in the STL formula: Redefine $y_i=-c_i x-d_i u-e_i$, thus $\neg$ is removed. 
\item If both $\mu_i$ and $\neg \mu_i$ appear: Introduce new secondary signal $y_{j}=-c_i x-d_i u - e_i$, thus $\neg \mu_i$ is replaced by the new predicate $\mu_j=(y_j \geq 0)$.
\end{itemize} 
\end{enumerate}
\end{proof}
Note that, in the worst case, the number of secondary signals is doubled during the construction of the positive normal form STL formula. As explained in Section \ref{sec:optimization}, the computational complexity of control synthesis is exponential with respect to the number of secondary signals. 

\subsection{Robust Prediction System}

At time $t$, given the control sequence $u^H[t]=(u^t[t]^T,\cdots,u^t[t+H]^T)^T$ and the uncertain input sequence $w^H[t]=(w^t[t]^T,\cdots,w^t[t+H-1]^T)^T$, the prediction for secondary signals $y^H[t]=(y^t[t]^T, \cdots,y^t[t+H]^T )^T$ is:
\footnote{With a slight abuse of notation, $w^H[t]$,$u^H[t]$ and $y^H[t]$ are interchangeably used for the described sequences and their corresponding representations as column vectors.}
\begin{equation}
\label{eq:matrix}
y^H[t]
=
\Phi_0^H x[t]
+
\Phi_1^H u^H[t]
+ 
\Phi_2^H w^H[t]
+
\underbar{1} \otimes e,
\end{equation}
where the flow matrices $\Phi_0,\Phi_1,\Phi_2$ are given by:
\begin{equation*}
\label{eq:phi}
\Phi_0^H=
\left(
\begin{array}{c}
C \\ CA \\ \vdots \\ CA^H
\end{array}
\right ),
\end{equation*}
\begin{equation*}
\Phi_1^H
=
\left(
\begin{array}{ccccc}
D & 0 && \cdots & 0 \\
CB & D && \cdots & 0 \\
CAB & CB && \cdots &0 \\
\vdots & \vdots &  \ddots &\ddots& \vdots \\
CA^{H-1}B & CA^{H-2}B & \cdots & CB & D 
\end{array}
\right),
\end{equation*}
\begin{equation*}
\Phi_2^H= 
\left(
\begin{array}{cccc}
0 & 0 & \cdots & 0 \\
C & 0 & \cdots & \vdots \\
CA & C & \cdots &\vdots \\
\vdots & \vdots &  \ddots & \vdots \\
CA^{H-1} & CA^{H-2} & \cdots & C 
\end{array}
\right).
\end{equation*}

Since the uncertain inputs belong to the polytope set $\mathcal{W}$, the admissible set of $y^H[t]$ is also a polytope in the finite horizon secondary signals space:
\begin{equation}
\mathcal{Y}_{u^H[t]}=\left \{ y^H[t] \middle| w^t[\tau] \in \mathcal{W}, \tau=t,\cdots,t+H-1 \right\},
\end{equation}
which consists of \emph{uncertainty image} set, $\left\{ \Phi_2 w^H[t] \middle| w^t[\tau] \in \mathcal{W} \right\} $, which can be computed beforehand, plus an affine map of the control sequence $\Phi_0^H x[t]+\Phi_1^Hu^H[t]+\underline{1}\otimes e$. The finite horizon robust prediction problem is finding controls $u^H[t]$ such that the set of constraints \eqref{eq:constraints} is satisfied for all points in $\mathcal{Y}_{u^H[t]}$. 
\begin{figure}[t]
\centering
\begin{tikzpicture}[xscale=0.55,yscale=0.8]
\draw[fill=cyan]  (4,2) -- (3,3) -- (5,3) -- (6,2) -- cycle;
\node[] at (3.5,4.4) { $y[t+2]$};
\node[] at (7.6,2.5) { $y[t+1]$};
\draw[->] (2,2.5) -- (6.4,2.5);
\draw[->] (3.8,1.5) -- (3.8,4);
\draw (4.0,1.2) node[] {a)};

\end{tikzpicture}
\begin{tikzpicture}[xscale=0.55,yscale=0.8]
\draw[dashed]  (3,2) -- (6,2) -- (6,3) --  (3,3) -- cycle;
\draw[fill=cyan]  (4,2) -- (3,3) -- (5,3) -- (6,2) -- cycle;
\node[] at (3.5,4.4) { $y[t+2]$};
\node[] at (7.6,2.5) { $y[t+1]$};
\draw[->] (2,2.5) -- (6.4,2.5);
\draw[->] (3.8,1.5) -- (3.8,4);
\draw (3,2) node[] {\textbullet};
\draw (4.0,1.2) node[] {b)};

\end{tikzpicture}
\caption{a) An example representation of the admissible set of $y^H[t]$ in the finite horizon secondary signals space. b) The lower left corner of the axis-aligned minimum bounding box of the set. }
\label{fig:shaded}
\end{figure}
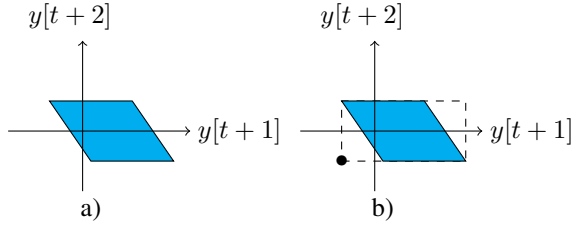
Since the robustness function is non-convex, its extreme values do not necessarily lie on the vertices of $\mathcal{Y}_{u^H[t]}$. For example,
consider a simple robustness function $\rho_y^\varphi=\max(y[t+1],y[t+2])$ and assume that for some $u^H[t]$, the set $\mathcal{Y}_{u^H[t]}$ is the shaded parallelogram illustrated in Fig. \ref{fig:shaded} a). It is observed that even though all the vertices of the parallelogram are in the positive robustness region, i.e. the first, second and fourth quadrants in Fig. \ref{fig:shaded} a), a small section lies in the third quadrant which corresponds to negative robustness. 

In order to maintain robust satisfaction, we enforce the constraints \eqref{eq:constraints} at a single point that is the lower left corner of the axis-aligned minimum bounding box of the uncertainty image set (See Fig. \ref{fig:shaded} b) for an illustration).
As stated in Theorem \ref{theorem_robust} later in the paper, with STL formula being in positive normal form, the robustness is guaranteed to be greater or equal to the robustness at the lower left corner of the box since the secondary signals of the image are greater element-wise. Therefore, the robust \emph{prediction} system is constructed based on the mentioned lower left corner. Note that this approach is, in general, conservative yet computationally manageable as the set of constraints \eqref{eq:constraints} are imposed for a single point.

\begin{define}
The lower left corner of the axis-aligned minimum bounding box of a bounded set $\mathcal{S} \subset \mathbb{R}^n$ is denoted by $\Omega(\mathcal{S})$, where the $i$'th element is given by:
\begin{equation}
\Omega_i(\mathcal{S})=\underset{s \in \mathcal{S} }\inf \quad s_i, ~i=1,\cdots,n.
\end{equation}
\end{define}

A polytope set $\mathcal{P} \subset \mathbb{R}^n$ can be represented by the convex hull of its vertices. For a given polytope $\mathcal{P}$, we define the matrix $P$ whose columns contain its vertices. 
%
%
%
\begin{define}
The function $ \omega: \mathbb{R}^{n\times m} \rightarrow  \mathbb{R}^n$ maps a $n\times m$ matrix to a $n$-dimensional vector where the $i$'th element is:
\begin{equation}
\omega_i(P)=\underset{j} \min \quad P_{ij}.
\end{equation}
 \end{define}
 
In words, the $i$'th element of the vector is the minimum value in the $i$'th row of the matrix. 
\begin{lemma}
\label{lemma:omega}
Let $\mathcal{P}$ be a polytope and matrix $P$ whose columns contain its vertices. Then:
\begin{equation}
\label{eq:omega_equal}
\Omega(\mathcal{P})=\omega(P).
\end{equation}
\end{lemma}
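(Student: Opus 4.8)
The plan is to prove the two vectors are equal componentwise by showing that for each coordinate $i$, the infimum of the $i$'th coordinate over the whole polytope $\mathcal{P}$ is attained at one of the vertices, hence equals the minimum of the $i$'th row entries of $P$. Fix an index $i \in \{1,\dots,n\}$. The key observation is that $\Omega_i(\mathcal{P}) = \inf_{s \in \mathcal{P}} s_i$ is the minimization of a linear functional (namely the coordinate projection $s \mapsto s_i$) over a polytope, and the fundamental theorem of linear programming guarantees that such an optimum is attained at a vertex of $\mathcal{P}$. Since the columns of $P$ are exactly the vertices, this immediately gives $\Omega_i(\mathcal{P}) = \min_j P_{ij} = \omega_i(P)$.

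First I would establish the two inequalities separately to make the argument clean. For the inequality $\Omega_i(\mathcal{P}) \le \omega_i(P)$: each vertex of $\mathcal{P}$ is itself a point of $\mathcal{P}$ (polytopes contain their vertices), so the infimum over all of $\mathcal{P}$ is no larger than the value $s_i$ at any vertex; taking the vertex achieving the minimum row entry yields $\Omega_i(\mathcal{P}) \le \min_j P_{ij}$. For the reverse inequality $\Omega_i(\mathcal{P}) \ge \omega_i(P)$: I would use the fact that $\mathcal{P}$ is the convex hull of its vertices, so any $s \in \mathcal{P}$ can be written as $s = \sum_j \lambda_j P_{\cdot j}$ with $\lambda_j \ge 0$ and $\sum_j \lambda_j = 1$. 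Then the $i$'th coordinate is $s_i = \sum_j \lambda_j P_{ij} \ge \sum_j \lambda_j \big(\min_k P_{ik}\big) = \min_k P_{ik} = \omega_i(P)$, because each $P_{ij}$ is bounded below by the minimum row entry and the convex weights sum to one. Since this holds for every $s \in \mathcal{P}$, it holds for the infimum as well.

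Combining the two inequalities gives $\Omega_i(\mathcal{P}) = \omega_i(P)$ for each $i$, and since this holds coordinatewise the vector identity $\Omega(\mathcal{P}) = \omega(P)$ follows. I would also remark that boundedness of $\mathcal{P}$ (it is a polytope, hence compact) ensures the infimum is finite and in fact a minimum, so there are no issues with the infimum being $-\infty$.

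I do not anticipate a serious obstacle here; this is essentially a restatement of the vertex-attainment property of linear programs over a bounded polytope. The only point requiring mild care is justifying that the coordinate-projection functional attains its minimum at a vertex — which I handle directly via the convex-combination bound above rather than invoking the LP optimality theorem as a black box, keeping the proof self-contained. The convex-combination argument is the cleanest route and avoids any appeal to machinery not introduced in the paper.
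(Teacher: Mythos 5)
Your proof is correct and follows essentially the same route as the paper's: both reduce the coordinatewise infimum to a minimization over convex weights on the vertices and observe that the weighted average $\sum_j \lambda_j P_{ij}$ is minimized by placing all weight on the vertex achieving $\min_j P_{ij}$. Your version is slightly more explicit in separating the two inequalities, but the underlying argument is identical.
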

\begin{proof}
The value of $i$'th element is the solution to the following linear program: 
\begin{equation*}
\begin{array}{cl}
\Omega_i(\mathcal{P})=\min & \sum\limits_{j} P_{ij} \lambda_{jk}, \\
s.t.&0\le\lambda_{jk} \le1,\\
&  \sum\limits_{k} \lambda_{jk} =1.
\end{array}
\end{equation*}
It is straightforward to verify that the solution is:
\begin{equation*}
\lambda_{jk}= \left \{ 
\begin{array}{ll}
1 & j= argmin~ P_{ij}, \\
0 & otherwise.
\end{array}
  \right. 
\end{equation*}
Therefore, $\Omega_i(\mathcal{P})=\omega_i(P)$ holds element-wise. 
\end{proof}


\begin{lemma}
\label{lemma_secondary}
If the open-loop control sequence $u^H[t]$ is applied to the system \eqref{eq:dynamics},\eqref{eq:secondary_signals}, the following relation holds:
\begin{equation}
y^{\omega,H}[t] \le y^H[t],
\end{equation}
where
\begin{equation}
\label{eq:predicted}
y^{\omega,H}[t]= \Phi_0 x[t]+\Phi_1 u^H[t]+ \omega\left(\Phi_2 (\underline{1}\otimes W)\right)+\underline{1}\otimes e,
\end{equation}
where $W$ is the matrix whose columns are given by the polytope $\mathcal{W}$'s vertices. 
\end{lemma}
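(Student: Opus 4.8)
The plan is to reduce the claimed inequality to a statement purely about the disturbance contribution, and then to invoke Lemma~\ref{lemma:omega}. Comparing \eqref{eq:matrix} with \eqref{eq:predicted}, the two expressions for $y^H[t]$ and $y^{\omega,H}[t]$ share exactly the same deterministic terms $\Phi_0 x[t]+\Phi_1 u^H[t]+\underline{1}\otimes e$; they differ only in that the true realization contributes $\Phi_2 w^H[t]$ while the prediction contributes $\omega(\Phi_2(\underline{1}\otimes W))$. Hence the lemma is equivalent to the element-wise inequality $\omega(\Phi_2(\underline{1}\otimes W))\le \Phi_2 w^H[t]$, which I would aim to establish for every admissible disturbance sequence, i.e. every $w^H[t]$ whose blocks $w^t[\tau]$ lie in $\mathcal{W}$.

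First I would describe the set of all possible disturbance contributions. Since each block $w^t[\tau]$ ranges independently over the polytope $\mathcal{W}$, the admissible sequences fill the product polytope $\mathcal{W}\times\cdots\times\mathcal{W}$, and its image $\mathcal{D}:=\{\Phi_2 w^H[t]\mid w^t[\tau]\in\mathcal{W}\}$ --- the \emph{uncertainty image} set --- is again a polytope. The true contribution $\Phi_2 w^H[t]$ is by construction a point of $\mathcal{D}$. Therefore, by the definition of $\Omega$, the lower-left corner $\Omega(\mathcal{D})$ satisfies $\Omega(\mathcal{D})\le \Phi_2 w^H[t]$ element-wise, because each coordinate of $\Omega(\mathcal{D})$ is an infimum of that coordinate over $\mathcal{D}$.

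The remaining, and central, step is to identify $\omega(\Phi_2(\underline{1}\otimes W))$ with $\Omega(\mathcal{D})$, so that Lemma~\ref{lemma:omega} closes the argument. Here I would exploit the block-lower-triangular structure of $\Phi_2$ together with the product structure of the disturbance set: each coordinate of $\Phi_2 w^H[t]$ is a linear functional of $w^H[t]$, so its minimum over the product polytope separates across the time blocks, and within each block the minimum of a linear functional over $\mathcal{W}$ is attained at a vertex, i.e. a column of $W$. I would then check that evaluating $\Phi_2$ on the vertex columns encoded by $\underline{1}\otimes W$ and taking the row-wise minimum $\omega$ reproduces this separated, per-coordinate infimum, giving $\omega(\Phi_2(\underline{1}\otimes W))=\Omega(\mathcal{D})$.

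I expect this last identification to be the main obstacle. The delicate point is that the row-wise minimum $\omega$ and the coordinatewise infimum over $\mathcal{D}$ coincide only if the columns fed into $\Phi_2$ realize, for each output coordinate, the worst-case disturbance simultaneously in every time block; I would therefore be careful to verify that the columns of $\underline{1}\otimes W$ (and the resulting extreme points of $\mathcal{D}$) capture the coordinatewise minimizers, rather than merely a subset of the vertices of the product polytope. Once this correspondence is confirmed, the chain $\omega(\Phi_2(\underline{1}\otimes W))=\Omega(\mathcal{D})\le \Phi_2 w^H[t]$ yields $y^{\omega,H}[t]\le y^H[t]$ and completes the proof.
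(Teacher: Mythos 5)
Your overall route---strip off the common deterministic terms, observe that $\Omega(\mathcal{D})\le \Phi_2 w^H[t]$ by the definition of the coordinatewise infimum, and then identify $\omega(\Phi_2(\underline{1}\otimes W))$ with $\Omega(\mathcal{D})$ via Lemma~\ref{lemma:omega}---is exactly the paper's (one-line) argument. But the step you defer to ``I would then check'' is the entire content of the lemma, and it does not go through as you (or the paper) set it up. Lemma~\ref{lemma:omega} requires the matrix whose columns are \emph{all} vertices of the polytope being minimized over; here that polytope is the product $\mathcal{W}\times\cdots\times\mathcal{W}$, whose vertices are all $q^{H}$ sequences $(v_{j_1},\dots,v_{j_H})$ of vertices of $\mathcal{W}$ (with $q$ the number of vertices of $\mathcal{W}$). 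The columns of $\underline{1}\otimes W$ supply only the $q$ ``diagonal'' sequences $(v_j,\dots,v_j)$. Because the coordinatewise infimum of $\Phi_2 w^H[t]$ separates over time blocks and may be attained at a \emph{different} vertex of $\mathcal{W}$ in each block, restricting to the diagonal columns yields only $\omega(\Phi_2(\underline{1}\otimes W))\ge \Omega(\mathcal{D})$, which is the wrong direction for concluding $y^{\omega,H}[t]\le y^H[t]$.

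The failure is concrete. Take $n=1$, $\mathcal{W}=[-1,1]$ so $W=(-1~~1)$, $C=-1$, $A=-1$, $H=2$. The last block row of $\Phi_2$ is $(CA~~C)=(1~\,{-1})$, so that coordinate of $\Phi_2 w^H[t]$ equals $w[t]-w[t+1]$, with infimum $-2$ over $\mathcal{W}^2$; the two columns of $\underline{1}\otimes W$ give $(-1)-(-1)=0$ and $1-1=0$, so the corresponding entry of $\omega(\Phi_2(\underline{1}\otimes W))$ is $0$, and the realization $w[t]=-1$, $w[t+1]=1$ violates the claimed inequality. The correct completion replaces $\omega(\Phi_2(\underline{1}\otimes W))$ by the block-separated sum of per-step minima---entrywise $\sum_{j}\omega(CA^{j}W)$ in each block row, equivalently $\omega$ applied to $\Phi_2$ times the full vertex matrix of $\mathcal{W}^H$---after which your chain $\Omega(\mathcal{D})\le\Phi_2 w^H[t]$ does finish the proof. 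Your instinct that this identification is the main obstacle is exactly right, and the paper's own proof asserts it without justification, so your caution exposes a gap in the source as well; but a plan that leaves the decisive step unverified is not yet a proof, and here that step, as stated, is false.
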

\begin{proof}
The proof follows from the definition of $\Omega$ function where
\begin{equation}
y^{\omega,H}[t]= \Omega \left (\left\{ \mathcal{Y}_{u^H[t]} \right\} \right) \le y^H[t].
\end{equation}
By applying Lemma \ref{lemma:omega} to \eqref{eq:matrix} we arrive at \eqref{eq:predicted}.
\end{proof}

Note that $\omega\left(\Phi_2 (\underline{1}\otimes W)\right)$ is computed prior to starting the control synthesis optimization problem. 

\begin{theorem}
\label{theorem_robust}
Given a linear system \eqref{eq:dynamics}, \eqref{eq:secondary_signals}, STL specification $\varphi$ in positive normal form and secondary signals history $y^{his}[t]$, for any control sequence $u^H[t]$ the following relations hold:
\begin{equation}
\label{eq:robust_predict}
\begin{array}{lcl}
\rho_y^\varphi[t-h^\varphi] & \geq & \rho_{y^{pre}}^\varphi[t-h^\varphi], \\
\rho_y^\varphi[t-h^\varphi+1] & \geq & \rho_{y^{pre}}^\varphi[t-h^\varphi+1], \\
& \vdots & \\ 
\rho_y^\varphi[t+h_p] & \geq & \rho_{y^{pre}}^\varphi[t+h_p],
\end{array}
\end{equation}
where $y^{pre}$ is the prediction secondary signal that is composed from the stored values of $y^{his}$ and the robust prediction values from \eqref{eq:predicted}.
\end{theorem}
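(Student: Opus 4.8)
The plan is to reduce the whole list of inequalities \eqref{eq:robust_predict} to a single element-wise domination between the actual secondary signal $y$ and the prediction signal $y^{pre}$, and then to finish by a direct appeal to the monotonicity that \emph{defines} positive normal form. First I would establish the pointwise bound $y^{pre}[\tau] \le y[\tau]$ for every time $\tau$ that any robustness value in \eqref{eq:robust_predict} can read. The signal $y^{pre}$ is defined piecewise: for past times $\tau < t$ it equals the stored history $y^{his}[\tau]$, while for $\tau \ge t$ it equals the robust prediction $y^{\omega,H}[\tau]$ from \eqref{eq:predicted}.

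On the history portion the two signals coincide, $y^{pre}[\tau] = y^{his}[\tau] = y[\tau]$, because the history records already-realized values that are fixed and unaffected by the choice of $u^H[t]$. On the current-and-future portion, Lemma \ref{lemma_secondary} gives exactly $y^{\omega,H}[t] \le y^H[t]$ for whatever disturbance realization actually occurs, so $y^{pre}[\tau] = y^{\omega,H}[\tau] \le y[\tau]$ for $\tau = t,\dots,t+H$. Since equality implies $\le$, combining the two cases yields $y^{pre}[\tau] \le y[\tau]$ uniformly across the splice at $\tau = t$ and for every admissible $w$.

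Next I would check that the prediction horizon covers the window each robustness value depends on. By the horizon definition, $\rho_y^\varphi[s]$ is a function only of $y[s],\dots,y[s+h^\varphi]$. The rows of \eqref{eq:robust_predict} range over $s = t-h^\varphi,\dots,t+h_p$, so the most distant value $\rho_y^\varphi[t+h_p]$ reads up to $y[t+h_p+h^\varphi] = y[t+H]$, precisely the last component produced by the prediction system \eqref{eq:matrix}. Hence every window $[s,s+h^\varphi]$ appearing in \eqref{eq:robust_predict} lies inside the range on which the domination $y^{pre} \le y$ was just established.

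Finally, I would apply the positive normal form property row by row: for each $s$ in the list, the restriction of $y^{pre} \le y$ to $\tau \in [s,s+h^\varphi]$ together with the monotonicity of $\rho^\varphi$ delivers $\rho_{y^{pre}}^\varphi[s] \le \rho_y^\varphi[s]$, which is the claimed inequality. The routine part is the monotonicity step; the main thing to get right is the seam between history and prediction. One must confirm that the fixed history entries are genuinely independent of $u^H[t]$ and that Lemma \ref{lemma_secondary} covers \emph{every} realization of $w$, so that $y^{pre} \le y$ holds simultaneously across the splice and robustly, rather than merely for the nominal disturbance.
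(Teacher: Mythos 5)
Your argument is correct and is precisely the paper's proof, which simply invokes Lemma \ref{lemma_secondary} together with the definition of positive normal form; you have filled in the details (the history/prediction splice, the horizon coverage check, and the row-by-row monotonicity step) that the paper leaves implicit. No issues.
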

\begin{proof}
The proof follows immediately from Lemma \ref{lemma_secondary} and the definition of positive normal form STL. 
\end{proof}
\begin{corollary}
If the control sequence $u^H[t]$ is found such that the set of constraints
\begin{equation}
\label{eq:constraints_robust}
\left \{
\begin{array}{cc}
\rho_{y^{pre}}^\varphi[t-h^\varphi] & \geq 0, \\
\rho_{y^{pre}}^\varphi[t-h^\varphi+1] & \geq 0, \\
\vdots & \\ 
\rho_{y^{pre}}^\varphi[t+h_p] & \geq 0,
\end{array}
\right.
\end{equation}
are satisfied, then the open-loop system response of $u^H[t]$ satisfies the set of original constraints \eqref{eq:constraints}. 
\end{corollary}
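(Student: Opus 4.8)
The plan is to recognize that this corollary is a direct logical consequence of Theorem~\ref{theorem_robust}, obtained by chaining two inequalities at each time index. No new machinery is required: all the substantive work was already carried out in Lemma~\ref{lemma_secondary} (the elementwise bound $y^{\omega,H}[t] \le y^H[t]$, valid for every admissible disturbance realization) and in Theorem~\ref{theorem_robust} (its translation into a bound on robustness via positive normal form). So I would present the argument as a short transitivity chain.

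First I would fix an arbitrary index $\tau \in \{t-h^\varphi, t-h^\varphi+1, \ldots, t+h_p\}$, i.e.\ one of the rows common to \eqref{eq:constraints} and \eqref{eq:constraints_robust}. Because $\varphi$ is in positive normal form and $y^{pre}$ is assembled from the stored history $y^{his}[t]$ together with the robust prediction \eqref{eq:predicted}, Theorem~\ref{theorem_robust} applies verbatim to the control sequence $u^H[t]$ and gives
\[
\rho_y^\varphi[\tau] \geq \rho_{y^{pre}}^\varphi[\tau],
\]
where $y$ denotes the actual secondary signal generated by applying $u^H[t]$ to \eqref{eq:dynamics},\eqref{eq:secondary_signals} under any realization of the disturbances in $\mathcal{W}$. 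I would then invoke the hypothesis of the corollary, namely that $u^H[t]$ satisfies \eqref{eq:constraints_robust}, which states exactly that $\rho_{y^{pre}}^\varphi[\tau] \geq 0$ for each such $\tau$. Combining the two by transitivity yields
\[
\rho_y^\varphi[\tau] \geq \rho_{y^{pre}}^\varphi[\tau] \geq 0 .
\]
Since $\tau$ ranges over the entire relevant window, every row of \eqref{eq:constraints} holds, which is precisely the assertion that the open-loop response of $u^H[t]$ satisfies the original constraints; and because the inequality from Lemma~\ref{lemma_secondary} holds for every admissible $w^H[t]$, the conclusion is robust to all disturbance realizations.

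There is no real obstacle in this argument; the only point deserving a moment of care is verifying that the hypotheses of Theorem~\ref{theorem_robust} are genuinely in force---specifically, that $\varphi$ has been put in positive normal form (so that the monotonicity of the robustness function turns the signal bound of Lemma~\ref{lemma_secondary} into a bound on robustness) and that the object $y^{pre}$ appearing in \eqref{eq:constraints_robust} is the same prediction signal used in Theorem~\ref{theorem_robust}. Both hold by construction, so the corollary follows immediately.
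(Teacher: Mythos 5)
Your proof is correct and matches the intended argument: the paper states this corollary without a separate proof precisely because it is the transitivity chain $\rho_y^\varphi[\tau] \geq \rho_{y^{pre}}^\varphi[\tau] \geq 0$ obtained by combining Theorem~\ref{theorem_robust} with the hypothesis \eqref{eq:constraints_robust}. Your added care about the positive normal form hypothesis and the uniformity over disturbance realizations is consistent with, and slightly more explicit than, the paper's treatment.
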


\begin{remark}
The methodology of this paper can be easily extended to linear time variant (LTV) systems. The required modification is generalizing the flow matrices in \eqref{eq:matrix} for time dependent matrices. In this case, the necessary assumption is that the time dependencies of the system matrices are known. 
\end{remark}

\begin{remark}
We have not assumed any restriction on the plant matrix $A$. In principle, an unstable $A$ results in large entries in matrices in \eqref{eq:matrix} that causes control decisions to be very conservative and may even cause infeasibility in longer horizon predictions. A well known technique in MPC literature is designing a control law in the form of $u[t]=Kx[t]+v[t]$, where $K$ is a fixed state feedback gain. The closed loop matrix $A^{cl}=A+BK$ can be designed to possess some important properties such as stability and nil-potency (if the pair $(A,B)$ is controllable). We have not investigated this approach since STL constraints, in general, are different from stability. We also remark that an analogous investigation of robust invariant sets \cite{bemporad1999robust} for STL MPC is an open problem.
\end{remark}

\section{Optimization Based Control}
\label{sec:optimization}

In the previous section, we explained our approach to the first objective of Problem \ref{problem:problem}. In this section, after explaining our approach to the formalization of the second and third objectives, i.e. optimality and minimality of violations, we formulate Problem \ref{problem:problem} as an optimization problem. Finally we explain how to express the optimization problem as a mixed integer programming (MIP) problem that is solvable using standard solvers.

\subsection{Optimization Problem}
A performance criterion is required for selecting a control sequence from the robust open-loop control candidates $u^H[t]$. In principle, there exist two different approaches to define a cost criteria for a nondeterministic system. First, one can optimize the cost using predictions from the nominal system, where the disturbances are assumed to take a known \emph{nominal} value. A more complicated alternative is optimizing the worst case cost that is admissible by the disturbance realizations. In this paper, we choose the nominal system cost since it is found to perform better in many classical MPC problems \cite{bemporad1999robust}. Furthermore, worst case cost approaches lead to optimization problems that are computationally more expensive. It should be noted that if the cost is only dependent on controls, the two approaches are identical.

We define $\hat{w}$ as the \emph{nominal} disturbance, that may be given or may be chosen by some means such as finding the centroid of the polytope $\mathcal{W}$. Given the current state $x[t]$, the nominal system prediction is given by 
\begin{equation}
\label{eq:nominal}
\begin{split}
\hat{x}[\tau+1]=& A\hat{x}[\tau]+Bu[\tau]+\hat{w}, \\ & t \le \tau \le t+H-1, \\ & \hat{x}[t]=x[t].
\end{split}
\end{equation}
At time $t$, the finite horizon cost function is:
\begin{equation}
\sum \limits_{\tau=t}^{t+H} J(\hat{x}[\tau],u[\tau]).
\end{equation}
Effectively, within all control sequences that robustly satisfy STL constraints, we choose the one with the least finite horizon nominal evolution cost. 

On the other hand, in systems with large disturbances or limited control actuation, it is possible that the STL constraints may be inevitably violated. 
If encountered with infeasibility, instead of terminating the control software, we find minimally violating solutions. With STL formula $\varphi$ in positive normal form, a counterfeit increase in all the secondary signals values eventually restores the satisfaction of STL constraints. This method is similar to constraint softening method in \cite{kerrigan2000soft}. We introduce softened secondary signals values as:
\begin{equation}
\label{eq:soft}
y^{his}_{soft}=y^{his}+\underline{1} \zeta,
~~~~~
y^{\omega,H}_{soft}=y^{\omega,H}+\underline{1} \zeta,
\end{equation}
where $\zeta \geq 0$ is the softening slack variable. Note that both robust prediction values and history values are softened to recover feasibility of \eqref{eq:constraints}. The artificial secondary signal composed from $y^{his}_{soft}$ and $y^{\omega,H}_{soft}$ is denoted by $y^{soft}$. 
We desire that if the STL constraints are infeasible, $\zeta$, the amount of softening, is minimized without optimizing the cost function. Finally, Problem \ref{problem:problem} is formulated as the following optimization problem:
\begin{equation}
\label{eq:optimization}
\begin{array}{ccl}
u^H[t]= & argmin & \sum \limits_{\tau=t}^{t+H} J(\hat{x}[\tau],u[\tau]) + M \zeta \\
& s.t.&  \rho_{y^{soft}}^\varphi[t-h^\varphi] \geq0 \\
& & \vdots \\
& &  \rho_{y^{soft}}^\varphi[t+h_p] \geq 0, \\
& & \text {Eqn. \eqref{eq:predicted}, \eqref{eq:nominal}, \eqref{eq:soft}},  \\
& & \zeta \geq 0,
\end{array}
\end{equation}
where $M$ is a large penalizing positive number that unifies the separate optimization problems for cost optimality and violation minimality. In case the STL constraints are feasible, large $M$ enforces $\zeta=0$ and the cumulated cost is optimized. In case of infeasibility, effectively, $\zeta$ is minimized without optimization of the cumulated cost.

\begin{proposition}
The smallest $\zeta$ such that the constraints set in \eqref{eq:optimization} is feasible is:
\begin{equation}
\zeta_{min}= \max\{0,-\underset{\tau=t-h^\varphi,\cdots,t+h_p}\min \rho_{y^{pre}}^\varphi[\tau] \}.
\end{equation}
\end{proposition}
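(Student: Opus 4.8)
The plan is to reduce the feasibility question to a single scalar inequality by exploiting a \emph{translation} property of the robustness function for formulas in positive normal form. The first step is to establish the following shift identity: for $\varphi$ in positive normal form and any scalar $\zeta$,
\[
\rho_{y+\underline{1}\zeta}^\varphi[\tau]=\rho_y^\varphi[\tau]+\zeta .
\]
The key observation is that, because $\varphi$ contains no negation, the robustness recursion \eqref{eq:semantics} is built solely from the predicate base case $\rho_y^\mu[\tau]=y[\tau]$ together with $\min$ and $\max$, none of which flips signs. Since both $\min$ and $\max$ commute with the addition of a common constant, i.e. $\min(a+\zeta,b+\zeta)=\min(a,b)+\zeta$ and likewise for $\max$, the identity follows by structural induction on $\varphi$: the base case adds $\zeta$ to each secondary signal value directly, while every inductive step $(\wedge,\vee,\Diamond_\mathcal{I},\Box_\mathcal{I},\mathcal{U}_\mathcal{I})$ merely propagates the additive $\zeta$ unchanged through nested $\min/\max$ expressions.

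Next I would apply this identity to the softened signal $y^{soft}=y^{pre}+\underline{1}\zeta$ of \eqref{eq:soft}. Each constraint in \eqref{eq:optimization} then reads $\rho_{y^{pre}}^\varphi[\tau]+\zeta\geq 0$, equivalently $\zeta\geq-\rho_{y^{pre}}^\varphi[\tau]$, for $\tau=t-h^\varphi,\cdots,t+h_p$. The whole constraint set therefore holds if and only if $\zeta\geq-\rho_{y^{pre}}^\varphi[\tau]$ for \emph{every} $\tau$ in the window, which collapses to the single requirement $\zeta\geq-\min_{\tau}\rho_{y^{pre}}^\varphi[\tau]$. Intersecting this with the standing constraint $\zeta\geq 0$, the admissible set is the half-line $[\zeta_{min},\infty)$ with $\zeta_{min}=\max\{0,-\min_{\tau}\rho_{y^{pre}}^\varphi[\tau]\}$, and the smallest feasible value is exactly this $\zeta_{min}$, as claimed.

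I expect the only genuine content to be the shift identity; once it is in hand, the remainder is a one-line manipulation of scalar inequalities, and both the feasibility of $\zeta_{min}$ and the infeasibility of any smaller value fall out of the same equivalence. The step I would be most careful about is the positive-normal-form hypothesis: it is precisely the absence of negation that guarantees the uniform increment $\underline{1}\zeta$ raises every relevant robustness value by $\zeta$ rather than lowering some of them. Had a negation remained, a term $-\rho_y^\varphi$ would convert an increase of $\zeta$ into a decrease, and the clean equality above would fail. Since any STL formula can be placed in positive normal form prior to synthesis (as established earlier), invoking the identity here is legitimate, and no case analysis beyond the structural induction is needed.
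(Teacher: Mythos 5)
Your proposal is correct, and it supplies the argument the paper deliberately omits (the paper only remarks that the result ``basically follows from the monotonicity of the robustness function''). Your route is essentially the intended one, but you sharpen it in a way worth noting: plain monotonicity of $\rho$ in the secondary signals, which is how the paper defines positive normal form, only yields $\rho_{y+\underline{1}\zeta}^\varphi[\tau]\ge\rho_{y}^\varphi[\tau]$ and hence that some sufficiently large $\zeta$ restores feasibility; it does not by itself pin down the exact value of $\zeta_{min}$. The property you actually prove and use --- translation equivariance, $\rho_{y+\underline{1}\zeta}^\varphi[\tau]=\rho_{y}^\varphi[\tau]+\zeta$, obtained by structural induction because the negation-free recursion is built purely from $\min$ and $\max$, which commute with adding a common constant --- is exactly what is needed to collapse the constraint set to $\zeta\ge-\min_\tau\rho_{y^{pre}}^\varphi[\tau]$ and hence to the stated formula. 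One small caveat: your induction hinges on $\varphi$ being negation-free rather than on the paper's definition of positive normal form per se; this is harmless here because the transformation of Proposition 2 always produces a negation-free formula, but it is worth stating that you are using that syntactic property rather than the semantic monotonicity definition. You also correctly rely on the fact that \eqref{eq:soft} softens both the history and the predicted values by the same $\underline{1}\zeta$, so the shift is uniform over the entire window $\tau=t-h^\varphi,\dots,t+h_p$; without that uniformity the clean equivalence would fail.
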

The detailed proof is not included as it basically follows from the monotonicity of the robustness function of a positive normal form STL formula. 

\begin{remark}
Eqn. \eqref{eq:soft} can be modified by using weights for softening different secondary signals. Multiple softening values for different secondary signals may also be used. In these cases, a practically efficient controller may require a tuning procedure to find the best softening strategy. 
\end{remark}

\begin{remark}
Removing the constraint $\zeta \geq 0$ results in a STL robustness maximization receding horizon policy. A negative $\zeta$ value can be viewed as \emph{constraint tightening}, i.e. how much constraints can be tightened while keeping feasibility. 
\end{remark}

\subsection{Mixed Integer Formulation}
STL constraints can be written as mixed integer constraints. One can encode the robustness function by representing $\max$ and $\min$ operations in the quantitative semantics, Eqn. \eqref{eq:semantics}, by a set of mixed integer constraints. This method typically introduces a large number of integer variables as the number of $\max/\min$ arguments may be large. An alternative approach, which is computationally more efficient, is binary encoding of quantitative semantics, which has been first introduced by the authors in \cite{raman}. In this section, we briefly explain this method.

The binary encoding is recursively executed. For a single predicate $\mu=(y \geq 0)$, a binary $z^\mu[t] \in \{0,1\}$ indicates whether the predicate at time $t$ is true, $z^\mu[t]=1$, or false, $z^\mu[t]=0$. The corresponding mixed integer constraints are:
\begin{equation}
\left \{ 
\begin{array}{ll}
y[t]-Kz^\mu[t] & \le 0, \\
y[t]+K(1-z^\mu[t]) & \geq 0,
\end{array}
\right.
\end{equation}
where $K$ is a sufficiently large positive number. Overall, $p\times(H+h^\varphi+1)$ number of binary variables is required to binary encode the secondary signals in \eqref{eq:optimization}. 
For encoding the STL formula, an additional number of variables are recursively defined as \cite{raman}:
\begin{itemize}
\item Conjunction: $\psi=\bigwedge_{i=1}^m \varphi_i~$:
\begin{equation}
\left\{ 
\begin{array}{c}
z^\psi[t] \le z^{\varphi_i} \\
z^\psi[t] \ge 1-m+\sum \limits_{i=1}^m z^{\varphi_i}[t]
\end{array}
\right.
\end{equation}
\item Disjunction: $\psi=\bigvee_{i=1}^m \varphi_i~$:
\begin{equation}
\left\{ 
\begin{array}{c}
z^\psi[t] \ge z^{\varphi_i}[t] \\
z^\psi[t] \le \sum \limits_{i=1}^m z^{\varphi_i}[t]
\end{array}
\right.
\end{equation}
\end{itemize}

\begin{itemize}
\item Eventually $\psi=\Diamond_{[a,b]} \varphi$
\begin{equation}
z^\psi[t] = \bigvee_{\tau=t+a}^{t+b} z^{\varphi_i}[\tau]
\end{equation}
\item Always $\psi=\Box_{[a,b]} \varphi$
\begin{equation}
z^\psi[t] = \bigwedge_{\tau=t+a}^{t+b} z^{\varphi_i}[\tau] 
\end{equation}
\item Until $\psi=\varphi_1\mathcal{U}_{[a,b]} \varphi_2$
\begin{equation}
z^\psi[t] = \bigvee_{\tau=t+a}^{t+b} (z^{\varphi_2}[\tau]  \wedge \bigwedge_{\tau^\prime=t}^{\tau} z^{\varphi_1}[\tau^\prime])
\end{equation}
\end{itemize}
Note that each $z^\psi \in [0,1]$ is not required to be declared as an integer since is automatically enforced to take values from $0$ or $1$. Finally, the set of constraints \eqref{eq:constraints} becomes binary encoded as:
\begin{equation}
\left \{
\begin{array}{cc}
z^\varphi[t-h^\varphi] & = 1, \\
z^\varphi[t-h^\varphi+1] & = 1, \\
\vdots & \\ 
z^\varphi[t+h_p] & = 1.
\end{array}
\right.
\end{equation}
Depending on the cost function $J$, the optimization problem is a mixed integer linear programming (MILP) (in case of linear $J$), a mixed integer quadratic programming (MIQP) (in case of quadratic $J$) or a mixed integer nonlinear programming (MINLP) (in case of nonlinear $J$). Mixed integer programs are exponentially expensive with respect to the number of integer variables, therefore the real time applications of STL MPC are restricted to small systems.

\section{Case Study}
\label{sec:case}
We consider a linear system in the form \eqref{eq:dynamics}, with 
\begin{equation*}
A=
\left( \begin{array}{cc} 1 & 0.5 \\ 0 & 0.8 \end{array} \right),
B=
\left( \begin{array}{c} 0 \\ 1 \end{array} \right).
\end{equation*}
The two dimensional state is $x[t]=(x_1[t],x_2[t])^T$ and control input is a scalar. This system represents a double integrator with energy loss, a type of system which is encountered in many engineering problems.
The disturbance $w[t]$ is bounded to the two dimensional box $\left \|w[t] \right \|_\infty \le w_0$, where $w_0$ is assigned multiple values as explained further. The stage cost function is $J(u[t])=\left | u[t] \right |$, which penalizes the control effort.

We desire a STL specification that enforces $x_1$ to oscillate between $2\le x_1\le 4$ and $-4\le x_1\le -2$, with each interval being visited at least once within any five consecutive time steps. The specification, written in positive normal form is:
\begin{equation*}
\begin{split}
\varphi=&\left(\Diamond_{[0,4]} ((y_1\ge 0) \wedge (y_2 \ge 0))\right) \wedge \\ & \left(\Diamond_{[0,4]} ((y_3 \ge 0) \wedge (y_4 \ge 0))\right),
\end{split}
\end{equation*}
for which the corresponding matrices from \eqref{eq:secondary_signals} are:
\begin{equation*}
C=\left( \begin{array}{cc} 1 & 0 \\ -1 & 0 \\ -1 & 0 \\ 1 & 0 \end{array} \right), D=0, e= \left( \begin{array}{c} -2 \\ 4 \\ -2 \\ 4 
\end{array} \right).
\end{equation*}
We chose $h_p=2$, which makes $H=h^\varphi+2=6$. The initial state values are $x_1[0]=x_2[0]=0$. The control admissible set is initially assigned as $\left | u \right | \le 20$.  We formulate the optimization problem given in \eqref{eq:optimization} as a MILP and find solutions using the MATLAB standard optimization toolbox MILP solver. The assigned value of $M$ in \eqref{eq:optimization} is $10^5$. We simulate the system for 30 time steps. The solution of each step takes less than 0.1s using a 2.8 GHz core i5 processor on an iMac computer. The uncertain values $w[t]$ are drawn randomly from a uniform distribution on $\mathcal{W}$. We investigated the following scenarios:

\begin{enumerate}

\item We observe that if the system was fully deterministic, $w_0=0$, the optimal-correct solution oscillates between $x_1=2$ and $x_1=4$ as illustrated in Fig. \ref{fig:case} a). The solution does not enter the mentioned regions as it is unnecessary and is associated with a larger control effort. The robustness function is always zero for this solution. 

\item We consider a disturbed system $w_0 = 0.2$. We observe that the nominal MPC, i.e. neglecting the presence of disturbances, fails to satisfy the specification. The trajectory is shown in Fig. \ref{fig:case} b) and the robustness function values in this case are occasionally negative.

\item We now implement the robust MPC introduced in this paper. It is observed that the robust solution enters the regions to conservatively maintain STL satisfaction (See Fig. \ref{fig:case} c)). The robustness function is always above zero for this case.

\item We broaden the disturbances set to  $\left \|w[t] \right \|_\infty \le 0.5$. The controller fails to find a robust solution to this scenario, see Fig. \ref{fig:case} d), thus constraint softening is required. This is particularly due to the long horizon considered where the worst case predictions cause infeasibility.  We observed that by decreasing the horizon length to $H=5$ and $H=4$, better solutions were found (results not shown). 

\item We now limit the admissible control set to $\left | u \right | \le 2$ with disturbances from the set $\left \|w[t] \right \|_\infty \le 0.2$. It is impossible to satisfy STL constraints with such a limited control set. The implementation result, shown in Fig. \ref{fig:case} e), does not satisfy the STL specification, but nevertheless, the observed trajectory is maximally oscillating between the two regions in order to minimally violate the specification. 
\end{enumerate}

\begin{figure}[t]
\begin{center}
\begin{tabular}{@{}c@{}@{}c@{}}

            \includegraphics[width=0.23\textwidth]{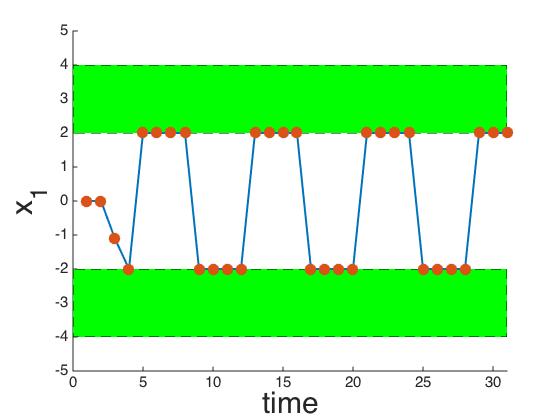} &  \includegraphics[width=0.23\textwidth]{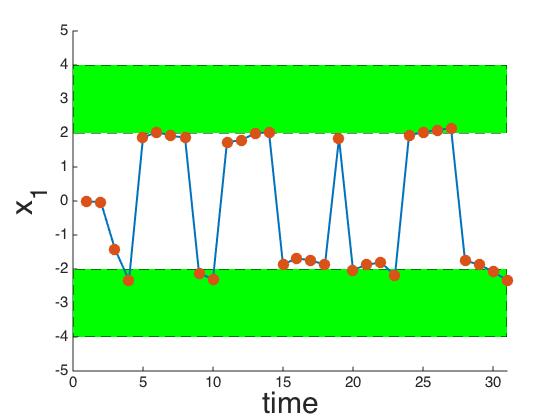}
                        \\
                                          a) $w[t]=0$ & b) $\left \|w[t] \right \|_\infty \le 0.2$ \\
                                                                                    MPC & nominal MPC
                        \\
                      \includegraphics[width=0.23\textwidth]{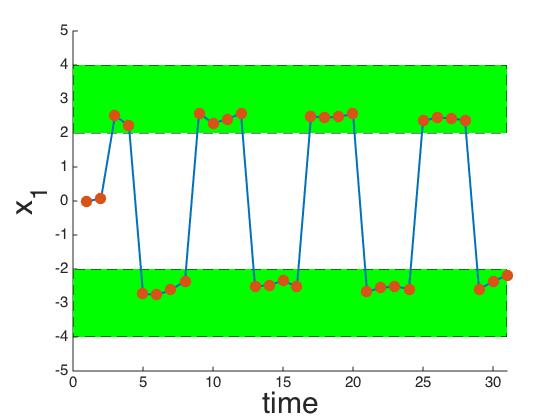} &    \includegraphics[width=0.23\textwidth]{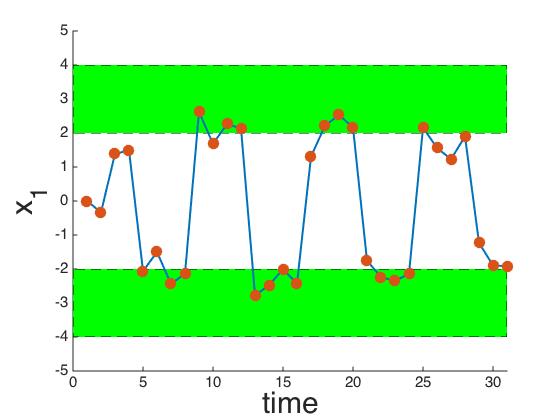} 
                        \\
                        c) $\left \|w[t] \right \|_\infty \le 0.2$  & d) $\left \|w[t] \right \|_\infty \le 0.5$
                        \\
                        robust MPC & soft constrained MPC
                        \\
                                              \includegraphics[width=0.23\textwidth]{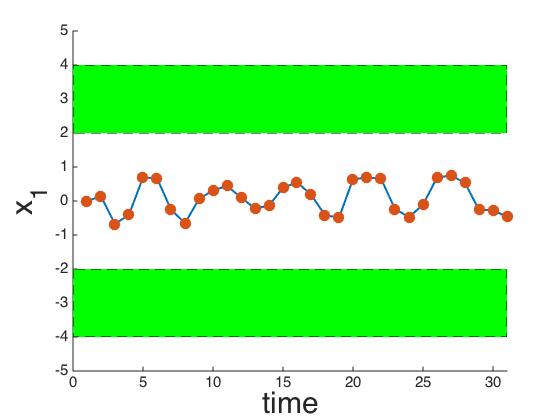} &    \includegraphics[width=0.23\textwidth]{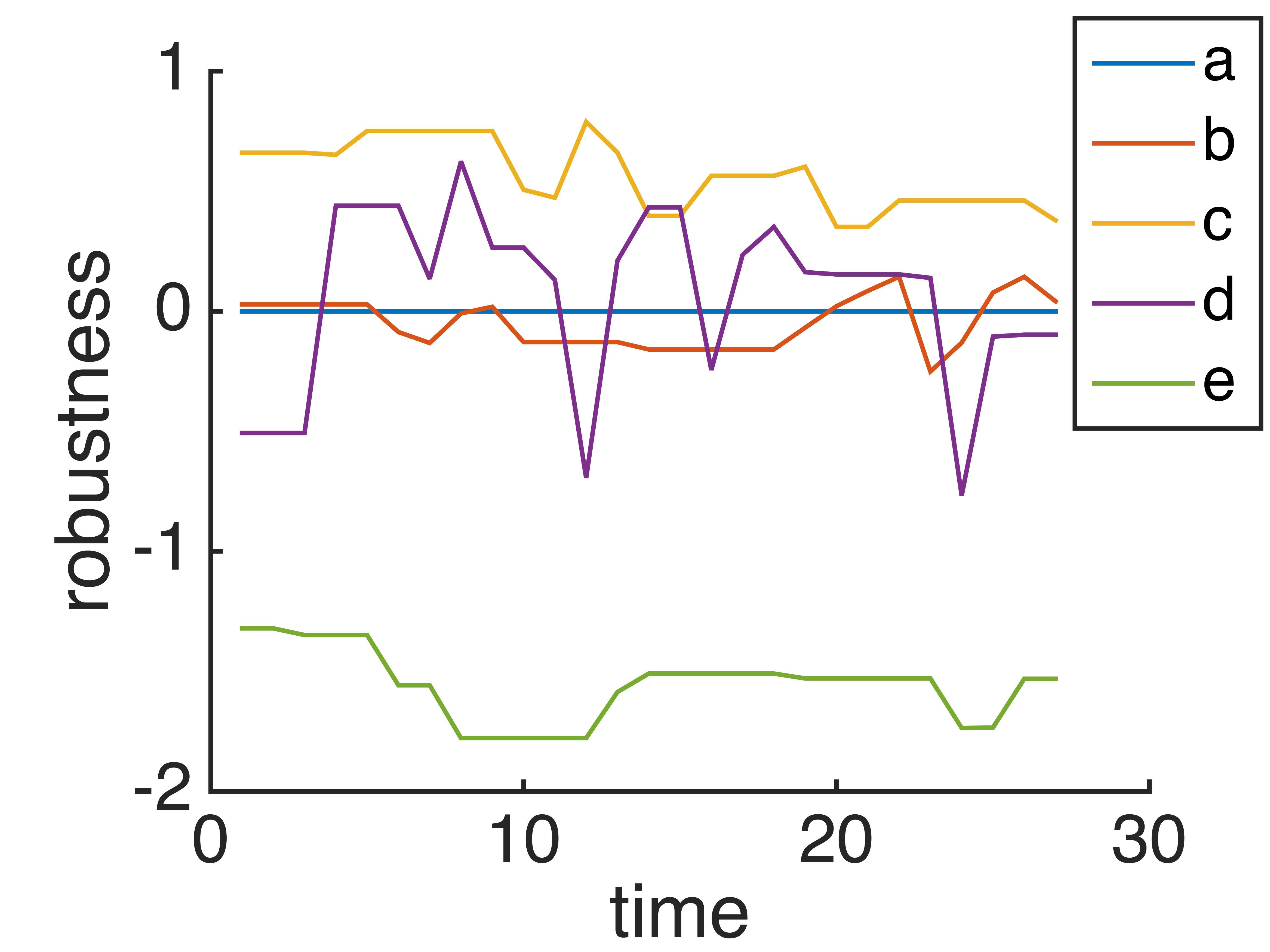}
                        \\
                        e) $\left |u[t] \right | \le 2$  & f) STL robustness function
                        \\
                        minimally violating MPC & values
                        
        \end{tabular}
\caption{Case study results}
\label{fig:case}
\end{center}
\end{figure}

The robustness values as a function of time are shown in Fig. \ref{fig:case} f) for the five different scenarios.

\section{Conclusion and Future Work}

In this paper, we focused on the connection between optimality and correctness for discrete time linear systems with additive bounded disturbances. Specifically, we proposed a model predictive control approach for the case when the correctness specifications are given as signal temporal logic formulas. We plan to extend these results to classes of discrete-time piecewise affine systems for which the monotonicity property stated in this paper holds, and apply them to controlling traffic networks. We are also working on extending this work to distributed model predictive control, where the additive disturbances of the component subsystems are used in assume-guarantee reasoning schemes. We believe that such techniques have the potential to impact temporal logic optimal control of large networks.

\bibliography{IEEEabrv,bib_robust}
\end{document}